\newtheorem{theorem}{Theorem}
\newtheorem{corollary}{Corollary}
\newtheorem{proposition}{Proposition}
\newtheorem{remark}{Remark}
\theoremstyle{definition}
\def\M{{\bf M}}
\def\K{{\bf K}}
\def\Q{{\bf Q}}
\def\I{{\bf I}}
\def\A{{\bf A}}
\def\U{{\bf U}}
\def\V{{\bf V}}
\def\F{{\bf F}}
\def\T{{\bf T}}
\def\G{{\bf G}}
\def\H{{\bf H}}
\def\U{{\bf U}}
\def\g{{\bf g}}
\def\u{{\bf u}}
\def\v{{\bf v}}
\def\h{{\bf h}}
\def\t{{\bf t}}
\def\m{{\bf m}}
\def\Thetab{\bm{\Theta}}
\def\Sigmab{\bm{\Sigma}}
\def\Lambdab{\bm{\Lambda}}
\def\tr{\operatorname{tr}}
\def\diag{\operatorname{diag}}
\begin{document}

\title{SNR Maximization in Beyond Diagonal RIS-assisted Single and Multiple Antenna Links}

\author{Ignacio Santamaria, \IEEEmembership{Senior Member, IEEE}, Mohammad Soleymani, \IEEEmembership{Member, IEEE},
Eduard Jorswieck, \IEEEmembership{Fellow, IEEE},
Jes{\'u}s Guti{\'e}rrez, \IEEEmembership{Member, IEEE}
\thanks{I.~Santamaria is with the Department of Communications Engineering, Universidad de Cantabria, Santander, Spain (e-mail: i.santamaria@unican.es). }
\thanks{M. Soleymani is with the Signal and System Theory Group, Universit{\"a}t  Paderborn, 33098 Paderborn, Germany (e-mail : mohammad.soleymani@sst.upb.de).}

\thanks{E. Jorswieck is with the Institute for Communications Technology, Technische Universität Braunschweig, 38106 Braunschweig,
Germany (e-mail: e.jorswieck@tu-bs.de).}
\thanks{J. Guti{\'e}rrez is with IHP - Leibniz-Institut
f{\"u}r Innovative Mikroelektronik, 15236 Frankfurt (Oder), Germany (e-mail: teran@ihp-microelectronics.com).}

\thanks{The work of I. Santamaria was supported by Ministerio de Ciencia e Innovaci{\'{o}}n and AEI (10.13039/501100011033), under grant PID2019-104958RB-C43 (ADELE). The work of Eduard Jorswieck was supported by
the Federal Ministry of Education and Research (BMBF, Germany) through
the Program of “Souverän. Digital. Vernetzt.” joint Project 6G-RIC, under
Grants 16KISK020K and 16KISK031. }}

\markboth{IEEE Signal Processing Letters}{}

\maketitle

\begin{abstract}
Reconfigurable intelligent surface (RIS) architectures not limited to diagonal phase shift matrices have recently been considered to increase their flexibility in shaping the wireless channel. One of these beyond-diagonal RIS or BD-RIS architectures leads to a unitary and symmetric RIS matrix. In this letter, we consider the problem of maximizing the signal-to-noise ratio (SNR) in single and multiple antenna links assisted by a BD-RIS. The Max-SNR problem admits a closed-form solution based on the Takagi factorization of a certain complex and symmetric matrix. This allows us to solve the max-SNR problem for SISO, SIMO, and MISO channels.

\end{abstract}

\begin{IEEEkeywords}
Reconfigurable intelligent surface, Takagi factorization, optimization, multiple antennas, multiple access channel.
\end{IEEEkeywords}

\IEEEpeerreviewmaketitle

\section{Introduction}
Reconfigurable intelligent surfaces (RISs) are recently receiving a great deal of attention as an enabling technology to increase spectral and energy efficiency in future wireless communication networks \cite{{RenzoJSAC2020},{pan2020multicell},{ZapponeTWT2019},{ZhangTWT2019},{SoleymaniTVT22},{SchoberPoorTCOM2023}}. The conventional structure of a RIS is a surface composed of passive reconfigurable elements where each element can introduce a phase shift. Therefore, the RIS is usually modeled as a diagonal matrix $\bm{\Theta}=\text{diag}\left(e^{j \theta_1}, \cdots,e^{j \theta_M}\right)$. The simple structure of conventional RIS limits its flexibility in modulating the equivalent wireless channel. For this reason, a new architecture called beyond-diagonal RIS (BD-RIS) has recently been proposed in \cite{{ClerckxTWC22a},{ClerckxTWC23}}. In the BD-RIS architecture, $\bm{\Theta}$ is a full matrix that, according to network theory, must satisfy the constraints: $\bm{\Theta} = \bm{\Theta}^T$ (reciprocal)\footnote{The constraint $\bm{\Theta} = \bm{\Theta}^T$ assumes a reciprocal passive network for which the power losses are the same between any pair of ports regardless of the direction of propagation. Passive RISs that do not satisfy the reciprocity property and, therefore, do not lead to symmetric matrices, have also been proposed in \cite{LiTVT22},\cite{ClerckxTWC23}.} and $\bm{\Theta}^H\bm{\Theta} \preceq \I$ (passive). When the RIS impedance network is purely reactive, the RIS matrix must be symmetric $\bm{\Theta} = \bm{\Theta}^T$, and unitary $\bm{\Theta}^H\bm{\Theta} = \I$ (passive lossless).

In this work, we first consider a SISO wireless link assisted by a BD-RIS with $M$ elements. We assume that the direct link between the transmitter and the receiver is blocked so that there is a single Tx-RIS-Rx link. This assumption represents a real scenario in which the direct link is significantly weaker than the RIS-aided link \cite{huang2019reconfigurable, AlouiniMaxMinTWC2020}. The equivalent channel is
\begin{equation*}
h_{eq} =  \h_R^H \Thetab \h_T,
\end{equation*}
where $\h_T \in \mathbb{C}^{M \times 1}$ is the channel from the transmitter to the RIS, $\h_R \in \mathbb{C}^{M \times 1}$ is the channel from the RIS to the receiver and $\bm{\Theta}$ is the $M \times M $ RIS matrix. We aim at finding the passive lossless BD-RIS matrix $\bm{\Theta}$ that maximizes the signal-to-noise-ratio (SNR) at the receiver or, equivalently, the received signal power   
\begin{subequations}
\begin{align}
\label{eq:MaxSNRproblemBDRIS}
({\cal P}_1): \,\max_{\bm{\Theta}}\,\,& |\h_R^H \Thetab \h_T |^2 \\
\text{s.t.}\,\,& \bm{\Theta}^H\bm{\Theta} = \I,   \label{eq:unitaryconst}\\
\,\,& \bm{\Theta} = \bm{\Theta}^T, \label{eq:symmetryconst}
\end{align}
\end{subequations}
where the constraint that makes the problem non-trivial is the symmetry constraint $\bm{\Theta} = \bm{\Theta}^T$.
Problem ${\cal{P}}_1$ is posed in \cite{ClerckxTWC22a}, where the resulting optimization problem is considered to be difficult to solve, hence the authors use a quasi-Newton method to find a solution \cite[pp. 1235]{ClerckxTWC23}. 

In this letter, we show that ${\cal P}_1$ has a closed-form solution in the SISO case based on Takagi's factorization \cite{{Takagi},{HornBook}} -- a special case of the singular value decomposition (SVD) for symmetric matrices--  of a certain complex symmetric matrix. The solution attains the maximum signal power attained by an unconstrained, unitary but not symmetrical, BD-RIS matrix. In addition, we extend our result to systems where either the transmitter or the receiver has multiple antennas. In the multiple-input single-output case (MISO), the scenario is equivalent to a SISO multiple-access channel (MAC) for which the Max-SNR solution also maximizes the sum rate. We conclude the letter with a collection of open problems.

\section{Max-SNR BD-RIS: The SISO case}
Let us begin by reminding the reader of some basic results.
We write the full SVD of the outer product channel matrix as $\h_R \h_T^H = \U \bm{\Lambda} \V^H$, where $\bm{\Lambda} = \diag(\lambda_1,0,\ldots,0)$. Then, it is known that the rank-1 unitary matrix $\u \v^H$, or the full-rank unitary matrix $\U \V^H$, maximize the output power achieving $P_{max}= \lambda_1^2$. 
Clearly, the difficulty of ${\cal P}_1$ stems from the symmetry constraint $\bm{\Theta} = \bm{\Theta}^T$. Its solution is based on Takagi's decomposition of a certain symmetric matrix, so we first review that result.
\begin{theorem}
Let $\A = \A^T$ be an $n \times n$ complex symmetric matrix. Then, there exist an $n \times n$ unitary matrix $\Q$ and an $n\times n$ positive semidefinite diagonal matrix $\bm{ \Sigma} = \diag(\sigma_1,\ldots,\sigma_n)$ such that $\A = \Q \bm{ \Sigma} \Q^T$.
\end{theorem}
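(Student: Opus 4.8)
The plan is to prove this by induction on $n$, reducing the dimension one step at a time (deflation); a direct reduction from the SVD of $\A$ also works but is awkward when $\A$ has repeated singular values. The base case $n=1$ is immediate: writing the single entry as $a = \sigma e^{j\phi}$ with $\sigma \ge 0$, take $\Q = (e^{j\phi/2})$ and $\bm{\Sigma} = (\sigma)$. For the inductive step, the two facts to exploit are that, since $\A = \A^T$, we have $\A^H = \bar\A$, and that the transformation $\A \mapsto \Q^H \A \bar\Q$ with $\Q$ unitary keeps the matrix complex symmetric (because $(\Q^H\A\bar\Q)^T = \Q^H\A^T\bar\Q = \Q^H\A\bar\Q$), while $\A = \Q\bm{\Sigma}\Q^T$ is the same as $\Q^H\A\bar\Q = \bm{\Sigma}$.

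The crux --- and the main obstacle --- is to produce a unit vector $\bm{q}_1$ with $\A\bar{\bm{q}}_1 = \sigma_1\bm{q}_1$, where $\sigma_1 = \|\A\|_2$. If $\sigma_1 = 0$ then $\A = \bm{0}$ and the statement is trivial, so assume $\sigma_1 > 0$. Pick a unit vector $\v$ attaining $\|\A\v\| = \sigma_1$; since $\v$ maximizes $\v^H\A^H\A\v$ over the unit sphere and $\A^H\A = \bar\A\A$ is Hermitian positive semidefinite, $\v$ is a top eigenvector, $\bar\A\A\v = \sigma_1^2\v$. Setting $\bm{y} = \A\v$ gives $\bar\A\bm{y} = \sigma_1^2\v$, and conjugating, $\A\bar{\bm{y}} = \sigma_1^2\bar\v$. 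Now for $\bm{w}_\pm = \sigma_1\bar\v \pm \bm{y}$ a short computation gives $\A\bar{\bm{w}}_\pm = \sigma_1\A\v \pm \A\bar{\bm{y}} = \sigma_1\bm{y} \pm \sigma_1^2\bar\v = \pm\sigma_1\bm{w}_\pm$. Since $\bm{w}_+ + \bm{w}_- = 2\sigma_1\bar\v \ne \bm{0}$, at least one $\bm{w}_\pm$ is nonzero; normalizing it and multiplying by $1$ or by $j$ to absorb the sign produces the required $\bm{q}_1$.

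To finish, extend $\bm{q}_1$ to a unitary matrix $\Q_1$ with first column $\bm{q}_1$ and form $\A' = \Q_1^H\A\bar\Q_1$. Then $\A'$ is complex symmetric, and its first column is $\Q_1^H\A\bar{\bm{q}}_1 = \sigma_1\Q_1^H\bm{q}_1 = \sigma_1\bm{e}_1$; by symmetry so is its first row, hence $\A' = \diag(\sigma_1, \A'')$ with $\A''$ an $(n-1)\times(n-1)$ complex symmetric matrix. The induction hypothesis gives $\A'' = \Q''\bm{\Sigma}''(\Q'')^T$ with $\Q''$ unitary and $\bm{\Sigma}''$ positive semidefinite diagonal, so $\A' = \widetilde\Q\,\widetilde{\bm{\Sigma}}\,\widetilde\Q^T$ with $\widetilde\Q = \diag(1, \Q'')$ unitary and $\widetilde{\bm{\Sigma}} = \diag(\sigma_1, \bm{\Sigma}'')$. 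Since $\Q_1$ is unitary we have $\A = \Q_1\A'\Q_1^T$, and therefore $\A = \Q\bm{\Sigma}\Q^T$ with $\Q = \Q_1\widetilde\Q$ unitary and $\bm{\Sigma} = \widetilde{\bm{\Sigma}}$, which closes the induction. One can moreover check that $\A^H\A = \bar\Q\,\bm{\Sigma}^2\,\bar\Q^{-1}$, so the $\sigma_i$ coincide with the singular values of $\A$ and Takagi's factorization is genuinely the symmetric counterpart of the SVD.
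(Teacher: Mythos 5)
Your proof is correct, and it is worth noting that the paper does not actually prove this theorem: it states the Autonne--Takagi factorization as a known result and cites Autonne, Takagi, and Horn--Johnson for it, so you have supplied an argument where the paper supplies a reference. Your deflation proof is essentially the classical existence argument. The crux is sound: a maximizer $\v$ of the Rayleigh quotient of the Hermitian PSD matrix $\A^H\A=\bar{\A}\A$ is a top eigenvector, the vectors $\bm{w}_{\pm}=\sigma_1\bar{\v}\pm\A\v$ satisfy $\A\bar{\bm{w}}_{\pm}=\pm\sigma_1\bm{w}_{\pm}$ as you compute, at least one is nonzero because their sum is $2\sigma_1\bar{\v}$, and multiplying by $j$ correctly flips the sign in the conjugate-linear eigenvalue relation (since $\overline{j\bm{w}}=-j\bar{\bm{w}}$). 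The deflation then works because $\A\mapsto\Q_1^{H}\A\bar{\Q}_1$ preserves complex symmetry and forces the first row and column to $\sigma_1\bm{e}_1$, and the reassembly $\A=\Q_1\A'\Q_1^{T}$ uses only $\bar{\Q}_1\Q_1^{T}=\I$. Compared with the alternative route through the SVD $\A=\U\bm{\Sigma}\V^{H}$ --- which requires showing that $\V^{T}\U$ is a symmetric unitary matrix that is block-diagonal with respect to the groups of equal singular values and then extracting a symmetric unitary square root of each block --- your induction sidesteps the multiplicity bookkeeping entirely, at the cost of being non-constructive; the SVD-plus-phase-correction recipe in the paper's Remark~2 is precisely the computational counterpart of your existence proof in the case of distinct singular values. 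Your closing observation that $\A^{H}\A=\bar{\Q}\bm{\Sigma}^{2}\bar{\Q}^{-1}$, so that the Takagi values are the singular values, matches the remark the paper makes immediately after the theorem.
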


This factorization of a complex symmetric matrix is called the Autonne-Takagi factorization, or the Takagi factorization in short, originally proposed by Autonne \cite{Autonne} and Takagi \cite{Takagi} (see also \cite[Chapter 4, Corollary 4.4.4]{HornBook}). The columns of $\Q$ are called the Takagi vectors of $\A$ and the diagonal elements of $\bm{ \Sigma}$ are its Takagi values. Notice that the Takagi values of $\A$ coincide with the singular values of $\A$. We have the following result.

\begin{proposition} \label{lemmaTakagi} Let $\h_R \h_T^H = \lambda_1 \u_R \u_T^H$ be the rank-one outer product matrix between  $\h_R = \|\h_R\| \u_R$ and $\h_T = \|\h_T\| \u_T$, where $\lambda_1 = \|\h_R\|  \|\h_T\|$. Form the rank-2 symmetric complex matrix $\A = \u_R \u_T^H + (\u_R \u_T^H)^T$, and compute its Takagi's factorization as $\A = \Q \bm{ \Sigma} \Q^T$, where $\Q$ is unitary and $\bm{ \Sigma} = \diag(\sigma_1, \sigma_2, 0,\ldots,0)$ are the Takagi values (singular values) of $\A$. Then, the solution of ${\cal P}_1$ is 
\begin{equation}
\label{eq:PhiBDRIS}
  \Thetab = \Q \Q^T.
\end{equation}
Furthermore, the maximum signal power achieved by this solution is $P_{max} = \lambda_1^2$.
\end{proposition}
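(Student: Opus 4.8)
The plan is to verify three facts. (i) \emph{Feasibility}: $\Thetab=\Q\Q^T$ is symmetric by inspection, and it is unitary because $\Thetab^H\Thetab=\bar\Q\,\Q^H\Q\,\Q^T=\bar\Q\Q^T=\I$, the last equality being the conjugate of $\Q\Q^H=\I$. (ii) \emph{Upper bound}: writing $\h_R=\|\h_R\|\u_R$ and $\h_T=\|\h_T\|\u_T$ with unit vectors, the objective of $({\cal P}_1)$ equals $\lambda_1^2|\u_R^H\Thetab\u_T|^2$, and for any unitary $\Thetab$ the Cauchy--Schwarz inequality gives $|\u_R^H\Thetab\u_T|\le\|\u_R\|\,\|\Thetab\u_T\|=1$, so $\lambda_1^2$ is an upper bound for $({\cal P}_1)$ (in particular, no unitary $\Thetab$, symmetric or not, can exceed it). (iii) It then remains to show that the proposed $\Thetab$ attains the bound, i.e.\ that $\Thetab\u_T=e^{j\phi}\u_R$ for some phase $\phi$, which by the equality case of Cauchy--Schwarz is exactly $|\u_R^H\Thetab\u_T|=1$.

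Step (iii) is the substance of the proof, and I would attack it through the structure imposed by the Takagi factorization of $\A=\u_R\u_T^H+(\u_R\u_T^H)^T=\u_R\u_T^H+\u_T^*\u_R^T$. Writing $\Q=[\mathbf q_1,\dots,\mathbf q_M]$, the identity $\A=\Q\Sigmab\Q^T$ is equivalent to $\Q^H\A\bar\Q=\Sigmab$, hence $\A\bar{\mathbf q}_i=\sigma_i\mathbf q_i$ for all $i$, and in particular $\A\bar{\mathbf q}_i=0$ for $i\ge3$. Treating first the generic case in which $\u_R$ and $\u_T^*$ are linearly independent (if they are collinear, $\A$ has rank one and the claim follows from a short direct computation), $\A\bar{\mathbf q}_i=0$ forces $\u_T^H\bar{\mathbf q}_i=0$ and $\u_R^T\bar{\mathbf q}_i=0$, i.e.\ $\u_R^H\mathbf q_i=0$ and $\u_T^T\mathbf q_i=0$, for $i\ge3$. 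Therefore $\u_R$ and $\u_T^*$ both lie in $\operatorname{span}\{\mathbf q_1,\mathbf q_2\}$, and $\Thetab\u_T=\sum_i(\mathbf q_i^T\u_T)\,\mathbf q_i=(\mathbf q_1^T\u_T)\mathbf q_1+(\mathbf q_2^T\u_T)\mathbf q_2$ lies in the same plane.

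The concluding step is an explicit computation in $\operatorname{span}\{\mathbf q_1,\mathbf q_2\}$. Write $\u_R=\alpha_1\mathbf q_1+\alpha_2\mathbf q_2$ and $\u_T^*=\beta_1\mathbf q_1+\beta_2\mathbf q_2$ with $\alpha_i=\mathbf q_i^H\u_R$, $\beta_i=\mathbf q_i^H\u_T^*$. Substituting into $\A=\u_R\u_T^H+\u_T^*\u_R^T=\sigma_1\mathbf q_1\mathbf q_1^T+\sigma_2\mathbf q_2\mathbf q_2^T$ and matching coefficients — legitimate since $\mathbf q_i^T\bar{\mathbf q}_j=\delta_{ij}$ makes $\{\mathbf q_i\mathbf q_j^T:i,j\in\{1,2\}\}$ linearly independent — gives $2\alpha_1\beta_1=\sigma_1$, $2\alpha_2\beta_2=\sigma_2$ and $\alpha_1\beta_2+\alpha_2\beta_1=0$. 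As $\sigma_1,\sigma_2>0$ here, the products $\alpha_i\beta_i$ are real and positive, so $\bar\beta_i=\big(\sigma_i/(2|\alpha_i|^2)\big)\alpha_i$; eliminating $\beta_1,\beta_2$ from the third relation and taking moduli yields $\sigma_1/|\alpha_1|^2=\sigma_2/|\alpha_2|^2$, hence $\bar\beta_i=c\,\alpha_i$ with the \emph{same} constant $c=\sigma_1/(2|\alpha_1|^2)>0$. Since $\mathbf q_i^T\u_T=\overline{\mathbf q_i^H\u_T^*}=\bar\beta_i$, this gives $\Thetab\u_T=\bar\beta_1\mathbf q_1+\bar\beta_2\mathbf q_2=c\,\u_R$, and $c=1$ follows from $\|\Thetab\u_T\|=\|\u_R\|=1$. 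Hence $\Thetab\u_T=\u_R$, $|\u_R^H\Thetab\u_T|=1$, and the maximum of $({\cal P}_1)$ is $P_{max}=\lambda_1^2$, as claimed.

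I expect the third paragraph to be the real obstacle: one must keep transpose and conjugate-transpose rigorously separate (the statement genuinely depends on $\Q\Q^T$, not $\Q\Q^H$), and then extract from the three scalar relations the fact that $\bar\beta_i$ is the \emph{same} positive multiple of $\alpha_i$ for both $i$ — this is precisely what makes $\Thetab\u_T$ align with $\u_R$ rather than merely lie in $\operatorname{span}\{\u_R,\u_T^*\}$. The feasibility check, the Cauchy--Schwarz bound, and the rank-one degenerate case are comparatively routine.
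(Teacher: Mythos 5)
Your proof is correct, but it takes a genuinely different route from the paper's. The paper's argument is a three-line trace identity: it sets $\g_R=\Q^H\u_R$, $\g_T=\Q^T\u_T$ (both unit norm), observes that the objective is $\lambda_1\g_R^H\g_T$ and that $\Q^H\A\Q^{*}=\g_R\g_T^H+(\g_R\g_T^H)^T=\Sigmab$, and then takes traces to conclude $2\,\g_T^H\g_R=\tr(\Sigmab)=2$, hence $\g_R^H\g_T=1$ and $\g_T=\g_R$. That argument hinges on the normalization $\tr(\Sigmab)=\sigma_1+\sigma_2=2$, which the paper justifies by writing $\tr(\Sigmab)=\tr(\A)=2$ --- an identity that does not actually hold as stated, since $\Q^H\A\Q^{*}$ is a consimilarity (not a similarity) and $\tr(\A)=2\,\u_T^H\u_R$ in general; the conclusion $\sigma_1+\sigma_2=2$ is nevertheless true. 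Your proof avoids this entirely: you expand $\u_R$ and $\u_T^{*}$ in the Takagi basis of the signal subspace, match coefficients to get $2\alpha_1\beta_1=\sigma_1$, $2\alpha_2\beta_2=\sigma_2$, $\alpha_1\beta_2+\alpha_2\beta_1=0$, and deduce $\Thetab\u_T=\u_R$ directly. This is longer and requires the careful bookkeeping of transposes versus conjugate transposes that you flag, plus a separate (easy) rank-one case, but it is fully self-contained, it exhibits the aligning action $\Thetab\u_T=\u_R$ explicitly rather than only the scalar identity $\g_R^H\g_T=1$, and as a by-product your relations give a rigorous derivation of the fact $\sigma_1+\sigma_2=2$ that the paper's trace step implicitly relies on. All the individual steps check out (feasibility, the Cauchy--Schwarz bound, the kernel argument showing $\u_R,\u_T^{*}\in\operatorname{span}\{\mathbf{q}_1,\mathbf{q}_2\}$, and the elimination yielding $\bar\beta_i=c\,\alpha_i$ with a common $c>0$), so I have no corrections --- only the remark that the paper gets the same conclusion much more quickly once $\tr(\Sigmab)=2$ is granted.
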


\begin{proof}
The fact that $\Thetab = \Q \Q^T$ is unitary and symmetric is trivially checked. Let us define the vectors $\g_R = \Q^H \u_R$ and $\g_T = \Q^T \u_T$. Since $\Q$ is unitary $\|{\g}_R\| = 1$ and $\|{\g}_T\| = 1$. With these definitions the equivalent channel is $ \h_R^H \Thetab \h_T =  \h_R^H \Q \Q^T \h_T = \lambda_1 {\g}_R^H {\g}_T$. We want to prove that $ \g_R^H \g_T = 1$, which in turn implies that $\g_T= \g_R$ and that the received signal power is $P_{max} = \lambda_1^2$. To prove that ${\g}_R^H \g_T = 1$ or, equivalently, that 
${\g}_T^H \g_R = 1$, notice that
\begin{equation}
\Q^H \A \Q^* = \g_R \g_T^H + (\g_R \g_T^H)^T = \bm{ \Sigma},
\label{eq:beforethetrace}
\end{equation}
where $\tr( \bm{ \Sigma}) = \tr(\A) = 2$. Taking traces in \eqref{eq:beforethetrace} and applying the circular property of the trace \cite[pp. 360]{Coherence} and $\tr(\g_R \g_T^H) = \tr((\g_R \g_T^H)^T )$, we finally get $2\tr(\g_R \g_T^H) = 2 \tr(\g_T^H \g_R) = 2 \g_T^H \g_R = 2$ and, therefore, $\g_R^H \g_T  =1$ thus proving the result\footnote{The authors thank Alessio Zappone for interesting discussions on the proof.}.
\end{proof}

Although we have considered for simplicity of exposition the case in which the direct channel is blocked, the solution can be extended to the case in which the equivalent channel is $h_{eq} = h + \h_R^H \Thetab \h_T$. In this case, after computing Takagi's factorization of $\A =  \u_R \u_T^H + (\u_R \u_T^H)^T = \Q \bm{ \Sigma} \Q^T$, the solution for the BD-RIS that maximizes the SNR is $ \Thetab = e^{j \angle h} \Q \Q^T$, where $\angle h$ is the phase of the direct link. The resulting channel gain is $|h_{eq}| = |h| + |\h_R^H \Q \Q^T \h_T| = |h| +\lambda_1$.

\begin{remark} The Max-SNR BD-RIS solution is not unique for the SISO case unless the number of BD-RIS elements is $M=2$. To show this, note that $\A = \u_R \u_T^H + (\u_R \u_T^H)^T = \Q \bm{ \Sigma} \Q^T$ is a rank-2 matrix. Now, partition the unitary matrix $\Q$ into signal and noise subspaces as $\Q = [\Q_{signal} | \Q_{noise}]$, where $\Q_{signal}$ contains the first $2$ columns of $\Q$, and $\Q_{noise}$ contains the remaining $M-2$ columns. Then, we may generate a new basis for the noise subspace as $\Q_{noise}' = \Q_{noise} \T$, where $\T$ is an $(M-2)\times (M-2)$ unitary matrix. The new matrix $\Q' = [\Q_{signal} | \Q_{noise}']$ defines a new unitary and symmetric BD-RIS matrix $\Thetab' = \Q' \Q'^T$ that provides the maximum output power $P_{max} = \lambda_1^2$. When $M=2$ there is no noise subspace so the solution is unique up to a complex scaling of the form $e^{j \theta}$.
\end{remark}

\begin{remark}
    Takagi's factorization algorithms \cite{{Qiao08},{Qiao09}} exploit the fact that the matrix is symmetric to improve computational efficiency. However, it is possible to obtain the factorization through a standard SVD of $\A = \u_R \u_T^H + (\u_R \u_T^H)^T = \F \K \G^H$, and then apply the following steps: i) compute $\t = \diag(\F^H \G^*)$; ii) compute $\bm{\phi} = \frac{\angle \t}{2}$, where $\angle \t$ extracts componentwise the angles of the entries of $\t$; iii) calculate $\F' = \F \diag(e^{j\bm{\phi}})$\footnote{With some abuse of notation, $\diag(\M)$ when $\M$ is a matrix denotes a vector with the diagonal components of $\M$, whereas $\diag(\m)$ when $\m$ is a vector denotes a diagonal matrix. We believe there is no confusion possible.}. With these steps, Takagi's factorization is $\A = \F' \K \F'^T$, and the BD-RIS matrix is $\Thetab = \F' \F'^T$.
\end{remark}

\begin{remark} While reviewing this work, the solution proposed for the same scenario in \cite{NeriniTWC2023} came to our attention. The work in \cite{NeriniTWC2023} is, however, based on a different factorization of the BD-RIS matrix: $ \Thetab = \V {\bf D} \V^T$, where $\V$ is a real orthogonal matrix and ${\bf D} = \diag \left(e^{\phi_1}, \ldots, e^{\phi_1} \right)$. It is interesting to note that the factorizations $ \Thetab = \V {\bf D} \V^T$, with $\V$ real and orthogonal and ${\bf D}$ complex diagonal with unit modulus elements, and $ \Thetab = \Q \Q^T$, with $\Q$ unitary complex, are equivalent. Using the factorization $ \Thetab = \V {\bf D} \V^T$ the authors in \cite{NeriniTWC2023} derive closed-form expressions for the vectors of the matrix $\V$ considering separately the cases $M=2$, $M=3$, and $M \geq 4$. In contrast, our solution based on Takagi's factorization can be obtained through an SVD for any $M$. We believe that Takagi factorization is a natural fit to the problem which provides a simple and easily accessible solution.
\end{remark}

\subsection{Group-connected BD-RIS} 
In ${\cal{P}}_1$, it is considered that all ports of the RIS elements are connected to each other, therefore defining a fully-connected BD-RIS architecture, which may complicate the required circuitry. A good trade-off between complexity and performance is provided by the group-connected BD-RIS, in which the $\Thetab$ matrix is block-diagonal with unitary and symmetric blocks \cite{ClerckxTWC22a}, \cite{ClerckxTWC23}. If we divide the $M$ BD-RIS elements into $G$ groups of $M_G = M/G$ elements each, the resulting Max-SNR optimization problem is 
\begin{align}
\label{eq:MaxSNRproblemBDRISGroup}
({\cal P}_2): \,\max_{\bm{\Theta}_1,\ldots,\bm{\Theta}_G}\,\,& \left | \sum_{g=1}^G \h_{R,g}^H \Thetab_g \h_{T,g} \right |^2 \\
\text{s.t.}\,\,& \bm{\Theta}_g^H\bm{\Theta}_g = \I_G, \, \forall g \nonumber \\
\,\,& \bm{\Theta}_g = \bm{\Theta}_g^T, \, \forall g. \nonumber
\end{align}
Performing Takagi's decomposition of the $M_G \times M_G$ matrices $ \u_{R,g} \u_{T,g}^H + (\u_{R,g} \u_{T,g}^H)^T =\Q_g \bm{ \Sigma}_g \Q_g^T $, we get $\bm{\Theta}_g = \Q_g\Q_g^T $ as solution that maximizes $\left|  \h_{R,g}^H \Thetab_g \h_{T,g} \right |^2$. From the proof of Proposition \ref{lemmaTakagi}, we know that  $\h_{R,g}^H \Q_g\Q_g^T \h_{T,g}$ is a positive real value. This means that the $G$ terms within the summation of \eqref{eq:MaxSNRproblemBDRISGroup} add up coherently and, therefore, they are the optimal solution of ${\cal P}_2$. In summary, problem \eqref{eq:MaxSNRproblemBDRISGroup} decouples into $G$ independent subproblems, each of which can be solved by performing Takagi's decomposition of the $M_G \times M_G$ matrices $ \u_{R,g} \u_{T,g}^H + (\u_{R,g} \u_{T,g}^H)^T $, $g=1,\ldots, G$. 

\section{Extensions}
\subsection{MISO and SIMO channels}
An optimal closed-form solution for the Max-SNR BD-RIS optimization problem can also be obtained when either the transmitter or the receiver is equipped with multiple antennas. Let us take the multiple-input single-output (MISO) case as an example. The single-input multiple-output (SIMO) case is solved analogously. In the MISO case, $\H_T$ is an $M\times N_T$ MIMO channel (we assume that $N_T \leq M$), and $\h_R$ is an $M\times 1$ MISO channel from the RIS to the single-antenna receiver. We assume that the direct link is blocked so we are interested in solving
\begin{align}
\label{eq:MaxSNRproblemBDRIS_SIMO}
({\cal P}_3): \,\max_{\bm{\Theta}}\,\,& \|\h_R^H \Thetab \H_T \|^2 \\
\text{s.t.}\,\,& \,\,\eqref{eq:unitaryconst} \,\, {\text{and}} \,\, \eqref{eq:symmetryconst}. \nonumber 
\end{align}

Notice that $\h_{eq} = \h_R^H \Thetab \H_T = [h_{eq}(1),\ldots, h_{eq}(N_T)]$ is now the equivalent $1 \times N_T$ MISO channel, whose entries are the equivalent SISO channels $h_{eq}(i) = \h_R^H \Thetab \H_T(:,i)$, $i=1,\ldots N_T$, where $\H_T(:,i)$ is the channel from the $i$th Tx antenna to the RIS. 
 
\begin{corollary}
  The optimal symmetric and unitary BD-RIS matrix that maximizes $\|\h_R^H \Thetab \H_T \|^2$ is $\Thetab= \Q \Q^T$, where $\Q$ is a unitary matrix obtained from Takagi's factorization of
\begin{equation}
 \A=  \u_{R} \u_{T,1}^H + (\u_R \u_{T,1}^H )^T = \Q \Sigmab \Q^T,
\end{equation}
where $\u_{R} = \frac{\h_R}{\|\h_R\|}$ and $\u_{T,1}$ is the largest left singular vector of $\H_T = \U_{T} \Lambdab_T \V_T^H = \sum_{i=1}^{N_T} \lambda_{T,i} \u_{T,i}\v_{T,i}^H$.
\end{corollary}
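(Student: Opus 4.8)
The plan is to sandwich the optimal value of $({\cal P}_3)$ between an upper bound valid for \emph{every} unitary $\Thetab$ and the value attained by the proposed $\Thetab=\Q\Q^T$, and then to observe that the two coincide; since $\Q\Q^T$ is moreover symmetric, it is feasible for $({\cal P}_3)$ and hence optimal. First I would rewrite the objective in terms of scalar SISO-type gains along the left singular directions of $\H_T$. Using $\H_T=\U_T\Lambdab_T\V_T^H$ and the fact that right-multiplication by the $N_T\times N_T$ unitary matrix $\V_T$ preserves the Euclidean norm of a row vector,
\begin{equation*}
\|\h_R^H\Thetab\H_T\|^2=\|\h_R^H\Thetab\U_T\Lambdab_T\|^2=\sum_{i=1}^{N_T}\lambda_{T,i}^2\,\bigl|\h_R^H\Thetab\u_{T,i}\bigr|^2 ,
\end{equation*}
so the problem becomes a weighted combination of the scalar gains $\h_R^H\Thetab\u_{T,i}$.

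Next, the upper bound. For any unitary $\Thetab$, the vectors $\{\Thetab\u_{T,i}\}_{i=1}^{N_T}$ form an orthonormal set in $\mathbb{C}^{M}$ (because $\Thetab$ is unitary and $\{\u_{T,i}\}$ is orthonormal), so using $\lambda_{T,i}\le\lambda_{T,1}$ and then Bessel's inequality,
\begin{equation*}
\sum_{i=1}^{N_T}\lambda_{T,i}^2\bigl|\h_R^H\Thetab\u_{T,i}\bigr|^2\le\lambda_{T,1}^2\sum_{i=1}^{N_T}\bigl|\h_R^H\Thetab\u_{T,i}\bigr|^2\le\lambda_{T,1}^2\|\h_R\|^2 .
\end{equation*}
This bound holds a fortiori over the smaller, symmetric-and-unitary feasible set of $({\cal P}_3)$.

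Finally, achievability. I would apply Proposition~\ref{lemmaTakagi} to the rank-one outer product $\h_R\u_{T,1}^H=\|\h_R\|\,\u_R\u_{T,1}^H$, i.e., take the ``transmit channel'' to be the unit vector $\u_{T,1}$. The Takagi factorization $\A=\u_R\u_{T,1}^H+(\u_R\u_{T,1}^H)^T=\Q\Sigmab\Q^T$ then produces $\Thetab=\Q\Q^T$, which is unitary and symmetric, and Proposition~\ref{lemmaTakagi} guarantees $\h_R^H\Q\Q^T\u_{T,1}=\|\h_R\|$ (a positive real). Retaining only the $i=1$ term in the reduced objective gives $\|\h_R^H\Thetab\H_T\|^2\ge\lambda_{T,1}^2\,|\h_R^H\Q\Q^T\u_{T,1}|^2=\lambda_{T,1}^2\|\h_R\|^2$, which meets the upper bound; hence $\Thetab=\Q\Q^T$ is optimal for $({\cal P}_3)$. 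As a by-product, the equality case of Cauchy--Schwarz applied to $\u_R^H(\Q\Q^T\u_{T,1})=1$ (both unit-norm) forces $\Q\Q^T\u_{T,1}=\u_R$, and since $\Q\Q^T$ is unitary, $\h_R^H\Q\Q^T\u_{T,i}=\|\h_R\|\,\u_{T,1}^H(\Q\Q^T)^H\Q\Q^T\u_{T,i}=\|\h_R\|\,\u_{T,1}^H\u_{T,i}=\|\h_R\|\,\delta_{1i}$; thus the resulting equivalent MISO channel is $\|\h_R\|\,\lambda_{T,1}\,\v_{T,1}^H$, a scaled copy of the dominant right singular direction of $\H_T$.

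The reduction and the two bounds are essentially routine; there is no serious obstacle, because the corollary inherits its substance from Proposition~\ref{lemmaTakagi}. The one point worth emphasizing — and the reason the closed form depends on $\u_{T,1}$ alone — is the orthonormal-frame/Bessel argument: a unitary $\Thetab$ spreads the orthonormal set $\{\u_{T,i}\}$ into another orthonormal set and cannot ``overfill'' the single direction $\h_R$, so at most the dominant singular mode can be served, and the Takagi construction serves exactly that mode. The SIMO case follows from the transposed version of the same argument.
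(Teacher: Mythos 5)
Your proof is correct and follows essentially the same route as the paper: reduce $\|\h_R^H\Thetab\H_T\|^2$ to the dominant left singular direction $\u_{T,1}$ of $\H_T$ and then invoke Proposition~\ref{lemmaTakagi} for the SISO pair $(\u_R,\u_{T,1})$. The only difference is that the paper delegates the reduction step to a citation, whereas you prove it explicitly via the SVD rewriting and Bessel's inequality over the orthonormal set $\{\Thetab\u_{T,i}\}$, which makes your version self-contained; the argument and the Cauchy--Schwarz by-product $\Q\Q^T\u_{T,1}=\u_R$ are both sound.
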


 The proof follows from the fact that, as pointed out in \cite{Nerino2021Arxiv}, the maximum of $\|\h_R^H \Thetab \H_T \|^2$ is achieved  when $|\u_R^H \Thetab \u_{T,1} |^2$ is maximized. Then we simply have to apply the result of Proposition \ref{lemmaTakagi} to a SISO system with channels $\u_R$ and $\u_{T,1}$. 
\begin{remark}
    Once the optimal BD-RIS has been obtained, the equivalent channel $\h_{eq} = \h_R^H \Thetab \H_T$ is known and the transmitter can apply, for example, the maximum ratio transmission (MRT) beamformer ${\bf w}_{T} = \h_{eq}^H/\|\h_{eq}\|$. With this solution, $\|\h_{eq}\|^2$ attains its maximum value $P_{max} = \lambda_{T,1}^2 \|\h_{R}\|^2$, thus solving the Max-SNR joint Tx-beamforming BD-RIS optimization problem.
\end{remark}

\subsection{Sum-rate maximization in the $K$-user SISO MAC}
Let us consider a $K$-user SISO Gaussian MAC in which the direct channels are blocked so that communication is only possible through the RIS. We can form the $M\times K$ matrix $\H_T = [\sqrt{P_1}\h_1, \ldots,\sqrt{P_K}\h_K]$ whose $k$th column represents the channel between the $k$th user and the RIS scaled by the square root of transmitted power. The sum rate is \cite{GoldsmithJSAC}
\begin{align*}
    \sum_{k=1}^K R_k &= \log \left( 1 + \frac{\sum_{k=1}^K P_k |\h_R^H \Thetab \h_k |^2} {\sigma^2} \right) \nonumber\\
    &= \log \left( 1 + \frac{ \|\h_R^H \Thetab \H_T \|^2 } {\sigma^2} \right). \nonumber
\end{align*}
Therefore, finding the BD-RIS that maximizes the sum rate in the $K$-user SISO MAC is equivalent to problem ${\cal P}_3$ in \eqref{eq:MaxSNRproblemBDRIS_SIMO}. 

\section{Simulation Results}
In this section, we evaluate the performance of the Max-SNR BD-RIS solution in two different scenarios: a SISO link and a multiple access channel (MAC). The coordinates $(x,y,z)$ in meters of the
Rx are $(50,0,2)$, the RIS is located at $(40,0,5)$, and the Tx (or the single-antenna users in the MAC) is (are) randomly deployed in a circle in the $(x,y)$ plane with a radius of 10 m centered at (0,0) and at a height $z=2$. We average the result of 100 Monte Carlo simulations. The large-scale path loss in dB is given by $PL = -30 - 10 \beta \log_{10} d$, where $d$ is the link distance in meters, and $\beta=3,75$ is the path-loss exponent. The direct links are blocked and the Tx-RIS and RIS-Rx links are assumed to be uncorrelated Rayleigh channels. For additional details, we refer the reader to \cite{SoleymaniTSP2023,soleymani2022improper}.

\subsection{SNR gain in SISO systems with BD-RIS}
In the first experiment, we evaluate the SNR gain of the BD-RIS using the proposed Max-SNR design for different group sizes with respect to the diagonal RIS that uses optimal phases $\theta_m = -\arg(\h_T(m)\h_R(m)^*)$, $\forall m$. The SNR gain is calculated as
\begin{equation*}
    {\text {\rm SNR Gain}} = 20 \log_{10}  \frac{|\h_R^H \Thetab_{BD-RIS} \h_T |}{|\h_R^H \Thetab_{RIS} \h_T |}.
\end{equation*}
The results are shown in Fig. \ref{fig:FigSPLetters_Rayleigh} for the fully-connected BD-RIS and for the group-connected BD-RIS with group sizes $M_G=$ 2, 4, and 8. The gain in the fully-connected case increases rapidly with the number of RIS elements approaching a value close to 2 dB of gain. Using a group size of $M_G=4$, a gain of about 1.4 dB is achieved, thus providing this value of the group size a good compromise between performance and complexity. For Rician fading channels, the gain between the BD-RIS and the diagonal RIS decreases with increasing Rician K-factor. For pure line-of-sight (LoS) channels, all RIS architectures provide the same SNR. These results are in agreement with the theoretical analysis in \cite{ClerckxTWC23}, where it is shown that for Rayleigh channels when $M \to \infty$, the power gain of a fully-connected BD-RIS over a diagonal RIS converges to $10\log_{10}(16/\pi^2) \approx 2.1$ dB (cf. Eq. (62) in \cite{ClerckxTWC23}). 
\subsection{Maximizing the sum rate in the SISO MAC}
In the second example, we evaluate the sum rate for a 2-user SISO MAC assisted by a BD-RIS. We consider a noise power
spectrum density of $-174 $ dBm/Hz, a system bandwidth of 2 MHz, and a transmit power of 
P = 20 dBm for the 2 users in the uplink. Fig. \ref{fig:SumRate} shows the sum rate for the optimal BD-RIS vs. the number of RIS elements $M$. We include in the plot the sum rate achieved by a conventional (diagonal) RIS with optimized phases \cite{zeng2020sum, Wang18, WangTVT2020, HuangTVT2020} and with random phases. 
\begin{figure}[ht]
    \centering
\includegraphics[width=.5\textwidth]{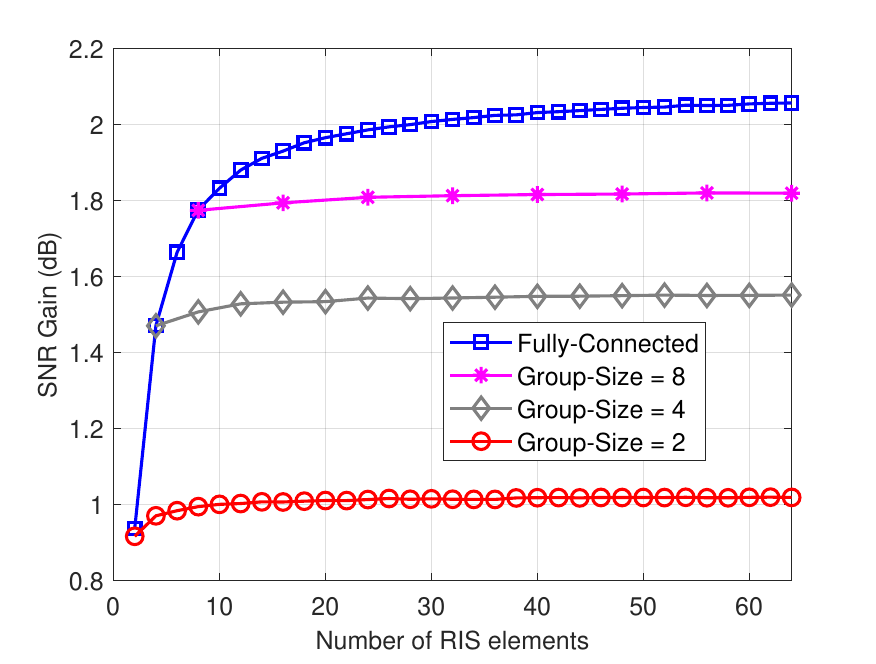}
     \caption{SNR gain in dBs of a BD-RIS compared to a diagonal RIS in a SISO channel with blocked direct link.}
	\label{fig:FigSPLetters_Rayleigh}
\end{figure}

\begin{figure}[ht]
    \centering
\includegraphics[width=.5\textwidth]{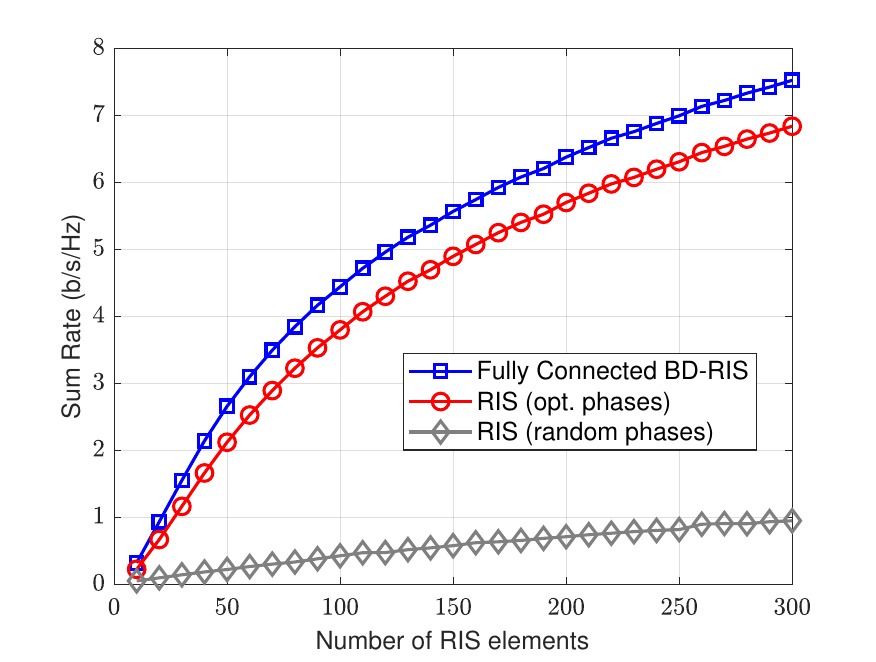}
     \caption{Sum rate vs. $M$ for a 2-user SISO MAC with a fully connected BD-RIS, a diagonal RIS with optimized phases, and a diagonal RIS with random phases.}
	\label{fig:SumRate}
\end{figure}

\section{Conclusion}
The Takagi factorization of a certain complex symmetric matrix allows us to solve the problem of maximizing the SNR in SISO and multi-antenna links assisted by a beyond-diagonal RIS. Some open problems along this line are extensions to the multiple-input multiple-output (MIMO) case, as well as to muti-antenna scenarios (SIMO, MISO, and MIMO) in which the direct channels are not blocked. It would also be interesting to consider the optimization of other metrics in networks assisted by a BD-RIS, such as the capacity in point-to-point MIMO links, the weighted sum rate in the MAC, or the interference leakage in interference channels \cite{Santamaria23}.

\balance
\newpage
\bibliographystyle{IEEEtran}
\bibliography{biblio}
\end{document}